\renewcommand{\paragraph}[1]{\vspace{1mm}\noindent{\bf #1}}
\newcommand{\eat}[1]{}
\newtheorem{definition}{Definition}[section]
\newtheorem{lemma}[definition]{Lemma}
\newtheorem{theorem}[definition]{Theorem}
\newtheorem{example}[definition]{Example}
\begin{document}

\title{Upper and Lower Bounds on the Cost of a Map-Reduce Computation}

\numberofauthors{1}
\author{Foto N. Afrati$^{\dag}$, Anish Das Sarma$^{\sharp}$, Semih Salihoglu$^{\ddag}$, Jeffrey D. Ullman$^{\ddag}$\\
\affaddr{\large $^{\dag}$National Technical University of Athens,  $^{\sharp}$Google Research, $^{\ddag}$Stanford University}\\
\email{\small afrati@softlab.ece.ntua.gr, anish.dassarma@gmail.com, semih@cs.stanford.edu, ullman@gmail.com}}

\date{}

\maketitle

\begin{abstract}
In this paper we study the tradeoff between parallelism and communication cost in
a map-reduce computation. For any problem that is not ``embarrassingly parallel,'' the finer we partition the work of the reducers so that more parallelism can be extracted, the greater will be the total communication between mappers and reducers. We introduce a model of problems that can be solved in a single round of map-reduce computation.  This model enables a generic recipe for discovering lower bounds on communication cost as a function of the maximum number of inputs that can be assigned to one reducer.   We use the model to analyze the tradeoff for three problems: finding pairs of strings at Hamming distance $d$, finding triangles and other patterns in a larger graph, and matrix multiplication.  For finding strings of Hamming distance 1, we have upper and lower bounds that match exactly.  For triangles and many other graphs, we have upper and lower bounds that are the same to within a constant factor.  For the problem of matrix multiplication, we have matching upper and lower bounds for one-round map-reduce algorithms.   We are also able to explore two-round map-reduce algorithms for matrix multiplication and show that these never have more communication, for a given reducer size, than the best one-round algorithm, and often have significantly less.
\end{abstract}

\section{Introduction}

We assume the reader is familiar with map-reduce \cite{DeanGhemawat} and its open-souce implementation Hadoop \cite{Hadoop}.  A brief summary can be found in Chapter~2 of \cite{MMDS}. There have been many custom solutions using a single round of map-reduce for specific problems, e.g., performing fuzzy joins~\cite{ADMPU12, VCL10}, clustering~\cite{TTLKF11}, graph analyses~\cite{AFU12, SV11}, multiway join~\cite{AU10}, and so on.  Here, we develop techniques for analyzing problems of this type and optimizing the performance on any distributed computing environment by explicitly studying an inherent trade-off between {\em communication cost} and {\em parallelism}.

\subsection{Communication and Parallelism for Map-Reduce}
\label{sum-appr-subsect}

This paper offers a model that helps us analyze how suited problems are to a map-reduce solution.  We focus on two parameters that represent the tradeoff involved in designing map-reduce algorithms.

First is the amount of communication between the map phase and the reduce phase.  Often, but not always, the cost of communication is the dominant cost of a map-reduce algorithm.  To represent the communication cost, we define and study {\em replication rate}.  The replication rate of any map-reduce algorithm is the average number of key-value pairs that the mappers create from each input.

The second parameter is the ``reducer size.''  A {\em reducer}, in the sense we use the term in this paper, is a reduce-key (one of the keys that can appear in the output of the mappers) together with its list of associated values, as would be delivered to a reduce-worker.  {\em Reducer size} is the upper bound on how long the list of values can be.   For example, we may want to limit a reducer to no more input than can be processed  in main memory.   A reduce-worker may be assigned many reduce-keys and works on them one at a time.   The total computation cost of the reducers is the sum over all keys (or ``reducers'') of the computation cost of processing all the values associated with that key.\footnote{Computation cost at the mappers is not treated separately, but is incorporated into the communication cost.}

Limiting reducer size also enables more parallelism.   Small reducer sizes force us to redesign the notion of a ``key'' in order to allow more, smaller reducers, and thus allow more parallelism if enough compute nodes are available.

\subsection{How the Tradeoff Can Be Used}
\label{how-use-subsect}

Suppose we have determined that the best algorithms for a problem have replication rate $r$ and reducer size $q$, where $r=f(q)$ for some function $q$.  Look ahead to Fig.~\ref{tradeoff-fig} for an example of what such a function $f$ might look like.  In particular, be aware that $f(q)$ usually grows as $q$ shrinks.   When we try to solve an instance of this problem on a particular cluster, we must determine the true costs of execution.  For example, if we are running on EC2 \cite{EC2}, we pay particular amounts for communication and for rental of virtual processors.  The communication cost is proportional to $r$; the constant of proportionality depends on the rate EC2 charges for communication and the size of our data.  The cost of renting processors is some function of $q$.

\begin{example}
\label{cost-function-ex}
If the reducer must compare all pairs of its inputs (e.g., consider the Hamming-distance-based similarity join discussed later in Example~\ref{hd1-ex}), then the work at each reducer is $O(q^2)$, and the number of reducers is inversely proportional to $q$, so the total processor cost is proportional to $q$.  That is, the cost of solving this instance of our problem is $ar+bq$ for some constants $a$ and $b$.  Since $r=f(q)$, the cost is $af(q)+bq$.  We find the value of $q$ that minimizes this expression.
That value tells us which of the algorithms lying along the curve $r=f(q)$ should be selected for this job.\footnote{Note that typically, $f(q)$ is monotonically decreasing in $q$, so there is a minimum at some finite value of $q$.}

If we were concerned more with wall-clock time than with total computation cost, then we might add a term representing the execution time for a single reducer.  In this hypothetical example, the time to compare $\binom{q}{2}$ pairs is $O(q^2)$, so we would minimize a function of the form $af(q)+bq+cq^2$.
\end{example}

Different problems will have different functions $r=f(q)$, and they will also have different functions of $q$ that measure the computation cost.  This function may not be the linear or quadratic functions suggested in Example~\ref{cost-function-ex}.  However:

\begin{itemize}

\item
Deducing the proper function of $q$ to represent the computation cost is not harder than analyzing, theoretically or experimentally, the running time of the serial algorithm that implements the reduce-function.

\end{itemize}

\subsection{Outline of the Paper}
\label{outline-subsect}

There may be many ways to solve nontrivial problems in a single round of map-reduce.  The more parallelism you want, the more communication overhead you face due to having to replicate inputs to many reducers.  In this paper:

\begin{itemize}

\item
We offer a simple model of how inputs and outputs are related.  We show how our model can capture a varied set of problems (Section~\ref{model-sect}).

\item
We define the fundamental tradeoff between

\begin{itemize}

\item[a)] Reducer size: the maximum number of inputs that one reducer can receive, and

\item[b)] Replication rate, or average number of key-value pairs to which each input is mapped by the mappers.

\end{itemize}

\item
We study three well-known problems: {\em Hamming Dist\-ance} (Section~\ref{hd1-sect}), {\em triangle finding} (Section~\ref{triangle-sect}) and some generalizations (Section~\ref{sample-graph-sect}), and {\em matrix multiplication} (Section~\ref{mm-sect}).  In each case
there is a lower bound on the replication rate that grows as reducer size shrinks (and therefore as the parallelism grows).
Moreover, we present algorithms that match
these lower bounds for various reducer sizes.

\end{itemize}

\subsection{Related Work}
\label{related-work-subsect}

In \cite{OR11}, the optimization of theta-join implementation by map-reduce was considered from a point of view similar to what we propose here.  This paper considers only one special case of our model, where each output depends on only two inputs, and they do not deal with the matter of tradeoff between reducer size and communication. An inherent trade-off between {\em communication cost} and {\em parallelism} has been studied in different contexts, e.g., pipelined parallelism~\cite{HasanM94}; we study this trade-off for single round map-reduce jobs.

The model of \cite{KSV10} proposes that a map-reduce algorithm should limit the input size of any reducer to be asymptotically smaller than the total amount of input.  This idea is appropriate for eliminating trivial algorithms that really do all the work serially in one reducer and thus limits consideration to algorithms that we might think of as truly parallel.  However, it does not let us get into the matter of size/communication tradeoffs.

Map-reduce differs from previous parallel-computation models (e.g., PRAM) in that it interleaves sequential and parallel computation.
Thus the essential constraint on map-reduce comes not so much from the demand for parallelism, but from the limit on how much input we can expect a reducer to handle and how costly communication among processors is.  For instance, if the input is small enough, then the optimal choice is to run everything at one compute node thus minimizing communication, regardless
of the asymptotics of your algorithm.

There has been a lot of interest in handling skewed data in map-reduce (e.g., \cite{KwonBHR12,KoutrisS11}).
The work closer to our setting is \cite{KoutrisS11} where the authors propose a slight modification
to the map-reduce computational framework to allow for small amount of communication
among the mappers in order to decide how to handle skewed data.
Handling skewed data is not the focus of our paper, but the need to deal with skewed data, e.g., graphs with some nodes whose degree is higher than the limit $q$ on reducer size, will require alternative algorithms.

Our model for describing problems is closely related to the notion of data provenance~\cite{provenance}. There has also been some work~\cite{IkedaPW11, OlstonR11}  on provenance in the context of distributed workflows, including map-reduce workflows.

\section{The Model}
\label{model-sect}

The model is simple yet powerful: We can develop some quite interesting and realistic insights into the range of possible map-reduce algorithms for a problem.  For our purposes, a {\em problem} consists of:

\begin{enumerate}

\item
Sets of {\em inputs} and  {\em outputs}.

\item
A {\em mapping} from outputs to sets of inputs.  The intent is that each output depends on only the set of inputs it is mapped to.

\end{enumerate}

\noindent There are two non-obvious points about this model:

\begin{itemize}

\item
Inputs and outputs are hypothetical, in the sense that they are all the possible inputs or outputs that might be present in an instance of the problem.  Any {\em instance} of the problem will have a subset of the inputs.  We assume that an output is never made unless at least one of its inputs is present, and in many problems, we only want to make the output if {\em all} of its associated inputs are present.

\item
We need to limit ourselves to finite sets of inputs and outputs.  Thus, a finite domain or domains from which inputs are constructed is essential, and a ``problem'' is really a family of problems, one for each choice of finite domain(s).  We also require that there be a finite set of outputs associated with each choice of input domain(s).  The values that these outputs can take may be a function of the inputs on which each output depends, and we do not need to specify the domain for the output in advance.  Example~\ref{sum-ex} illustrates how the outputs can compute a function of their associated inputs.

\end{itemize}

\subsection{Examples of Problems}
\label{prob-ex-subsect}

In this section we offer several examples of common map-reduce problems and how they are modeled.

\begin{example}
\label{join-ex}
{\em Natural join} of two relations $R(A,B)$ and $S(B,C)$.  The inputs are tuples in $R$ or $S$, and the outputs are tuples with schema $(A,B,C)$.  To make this problem finite, we need to assume finite domains for attributes $A$, $B$, and $C$; say there are $N_A$, $N_B$, and $N_C$ members of these domains, respectively.

Then there are $N_AN_BN_C$ outputs, each corresponding to a triple $(a,b,c)$.  Each output is mapped to a set of two inputs.  One is the tuple $R(a,b)$ from relation $R$ and the other is the tuple $S(b,c)$ from relation $S$.  The number of inputs is $N_AN_B + N_BN_C$.

Notice that in an instance of the join problem, not all the inputs will be present.  That is, the relations $R$ and $S$ will be subsets of all the possible tuples, and the output will be those triples $(a,b,c)$ such that both $R(a,b)$ and $S(b,c)$ are actually present in the input instance.
\end{example}

\begin{example}
\label{triangle-ex}
{\em Finding triangles}.  We are given a graph as input and want to find all  triples of nodes such that in the graph there are edges between each pair of these three nodes.  To model this problem, we need to assume a domain of size $N$ for the nodes of the input graph.  An output is thus a set of three nodes, and an input is a set of two nodes.  The output $\{u,v,w\}$ is mapped to the set of three inputs $\{u,v\}$, $\{u,w\}$, and $\{v,w\}$.  Notice that, unlike the previous and next examples, here, an output is a set of more than two inputs.  In an instance of the triangles problem, some of the possible edges will be present, and the outputs produced will be those such that all three edges to which the output is mapped are present.
\end{example}

\begin{example}
\label{hd1-ex}
{\em Hamming distance 1}.  The inputs are binary strings, and since domains must be finite, we shall assume that these strings have a fixed length b.  There are thus $2^b$ inputs.  The outputs are pairs of inputs that are at Hamming distance 1; that is, the inputs differ in exactly one bit.  Hence there are $(b/2)2^b$ outputs, since each of the $2^b$ inputs is Hamming distance 1 from exactly $b$ other inputs~-- those that differ in exactly one of the $b$ bits.  However, that observation counts every pair of inputs at distance 1 twice, which is why we must divide by 2.
\end{example}

\begin{example}
\label{sum-ex}
{\em Grouping and aggregation}.
This example illustrates how to deal with a problem where the outputs are more than ``yes'' or ``no'' responses to whether a given set of inputs exists.  Here, each output depends on a large set of possible inputs, and the result of an output is calculated from those of its associated inputs that actually appear in the data set.
Suppose we have a relation $R(A,B)$ and we want to implement  group-by-and-sum:

\begin{center}\begin{tabular}{l}
{\tt SELECT A, SUM(B)}\\
{\tt FROM R}\\
{\tt GROUP BY A;}\\
\end{tabular}\end{center}
We must assume finite domains for $A$ and $B$. An output is a value of $A$, say $a$, chosen from the finite domain of $A$-values, together with the sum of all the $B$-values.  This output is associated with a large set of inputs: all tuples with $A$-value $a$ and any $B$-value from the finite domain of $B$.  In any instance of this problem, we do not expect that all these tuples will be present for a given $A$-value, $a$, but (unlike the previous examples) as long as at least one of them is present there will be an output for this value $a$.
\end{example}

\subsection{Mapping Schemas}
\label{rr-subsect}

\eat{
As discussed in Section~\ref{sum-appr-subsect}, for many problems, there is a tradeoff between reducer size and replication rate.  It can be argued that the existence of such a tradeoff is tantamount to the problem being ``not embarrassingly parallel''; that is, the more parallelism we introduce, the greater will be the total cost of computation.
}

\eat{
As communication tends to be expensive, and in fact is often the dominant cost, we'd like to keep the replication rate low.  However, there is also a good reason to want to keep reducer size  low.  Doing so may enable us to execute the reduce-function in main memory.  Also, the smaller the reducer size, the more parallelism there can be and the lower will be the wall-clock time for executing the map-reduce job (assuming there is an adequate number of compute-nodes to execute all reducers in parallel).  Alas, for problems that are not embarrassingly parallel, we must expect that the smaller the reducer size, the greater will be the replication rate, that is, the number of reducers to which each input is sent.
}

In our discussion, we shall use the convention that $q$ is the maximum number of inputs that can be sent to any one reducer.  

A {\em mapping schema} for a given problem, with a given value of $q$, is an assignment of a set of inputs to each reducer, subject to the constraints that:

\begin{enumerate}

\item
No reducer is assigned more than $q$ inputs.

\item
For every output, there is (at least) one reducer that is assigned all of the inputs for that output.  We say such a reducer {\em covers} the output.  This reducer need not be unique, and it is, of course, permitted that these same inputs are assigned also to other reducers.

\end{enumerate}

The figure of merit for a mapping schema with a given reducer size $q$ is the replication rate, which we defined to be the average number of reducers to which an input is mapped by that schema.  Suppose that for a certain algorithm, the $i$th reducer is assigned $q_i \le q$ inputs, and let $I$ be the number of different inputs.  Then the replication rate $r$ for this algorithm is
$$r = \sum_{i=1}^p q_i/I$$

\begin{example}
\label{word-count-ex}
To see one subtlety of the model, consider the canonical example of a map-reduce algorithm: word-count.  In the standard formulation, inputs are documents, and the outputs are pairs consisting of a word $w$ and a count of the number of times $w$ appears among all the documents.  The standard algorithm works as follows.  The map function takes a document, breaks it into words, and for each word $w$, it generates a key-value pair $(w,1)$.  There is one reducer for each key (i.e., for each word), and the reduce-function sums the 1's in the list of values it is given for a word and thus computes the count for that word.

It looks like there is a great deal of replication, because each input results in as many key-value pairs as there are words.  However, this view is deceptive.  We could just as well have thought of the inputs as the word occurrences themselves, and then each word occurrence results in exactly one key-value pair.  That is, the replication rate is 1, independent of the limit $q$ on reducer size.\footnote{Technically, if $q$ is smaller than the number of occurrences of a particular word, then this algorithm will not work at all.  But there is little reason to chose a $q$ that small.}
Since the replication rate is identically 1, there is no tradeoff at all between $q$ and replication rate; i.e., the word-count problem is embarrassingly parallel, as we knew all along.
\end{example}

We want to derive upper and  lower bounds on the minimum possible $r$, as a function of $q$, for various problems, thus demonstrating the tradeoff between high parallelism (many small reducers, so $q$ is small) and low overhead (total communication cost~-- measured by the replication rate).  

\subsection{Independence of  Inputs in the Mappers}
When we calculate bounds on the replication rate we pretend that we have an instance
of the problem where all inputs over the given domain are present.
This actually captures the nature of map-reduce computation. Normally, in a mapper, a map function turns input objects into key-value pairs independently, without knowing what else is in the input. Thus, we can take the assumption that  the mapping schema assigns inputs to processors without reference to what inputs are actually present. Consequently,
the replication rate $r$ we calculate represents the expected communication if we multiply it
by the number of inputs actually present, so $r$ is a good measure of the communication cost incurred by any instance of the problem.

Further to this point,
recall  that $q$ counts the number of potential inputs in a reducer, regardless of which inputs are actually present for an instance of the problem.  However, on the assumption that inputs are chosen independently with fixed probability, we can expect the number of actual inputs at a reducer to be $q$ times that probability, and there is a vanishingly small chance of significant deviation for large $q$.  If we know the probability of an input being present in the data is $x$, and we can tolerate $q_1$ real inputs at a reducer, then we can use $q=q_1/x$ to account for the fact that not all inputs will actually be present.

\subsection{The Recipe for Lower Bounds}
\label{sum-strat-subsect}

While upper bounds on $r$ for all problems are derived using constructive algorithms, there is a generic technique for deriving lower bounds.  Before proceeding to concrete lower bounds, we outline in this section the recipe that we use to derive all the lower bounds used in this paper.

\begin{enumerate}

\item
{\bf Deriving $g(q)$:} First, find an upper bound, $g(q)$, on the number of outputs a reducer can cover if $q$ is the number of inputs it is given.

\item
{\bf Number of Inputs and Outputs:} Count the total numbers of inputs $|I|$ and outputs $|O|$.

\item
{\bf The Inequality:} Assume there are $p$ reducers, each receiving $q_i \leq q$ inputs and covering $g(q_i)$ outputs.  Together they cover all the outputs. That is:

\begin{equation}
\label{inequality-eq}
\sum_{i=1}^pg(q_i) \geq |O|
\end{equation}

\item
{\bf Replication Rate:}
Manipulate the inequality from Equation~\ref{inequality-eq} to get a lower bound on
the replication rate, which is $\sum_{i=1}^pq_i / |I|$.

\end{enumerate}

Note that the last step above may require clever manipulation to factor out the replication rate. We have noticed that the following ``trick'' is effective in Step~(4) for all problems considered in this paper.  First, arrange to isolate a single factor $q_i$ from $g(q_i)$; that is:
\begin{equation}
\label{recipe1-eq}
\sum_{i=1}^pg(q_i) \geq |O| \Rightarrow \sum_{i=1}^p q_i \frac{g(q_i)}{q_i} \geq |O|
\end{equation}
\noindent Assuming $\frac{g(q_i)}{q_i}$ is monotonically increasing in $q_i$, we can use the fact that $\forall q_i: q_i\leq q$ to obtain from Equation~\ref{recipe1-eq}:
\begin{equation}
\label{recipe2-eq}
\sum_{i=1}^p q_i \frac{g(q)}{q} \geq |O|
\end{equation}
Now, divide both sides of Equation~\ref{recipe2-eq} by the input size, to get a formula with the replication rate on the left:
\begin{equation}
\label{recipe3-eq}
r = \frac{\sum_{i=1}^p q_i}{|I|} \geq \frac{q|O|}{g(q) |I|}
\end{equation}
Equation~\ref{recipe3-eq}  gives us a lower bound on $r$.
Thus, in summary, given a particular problem, we derive our lower bounds in this paper
as follows:

\begin{itemize}

\item Suppose the instance of the problem has $|I|$ inputs and
$|O|$ outputs.

\item
We find an upper bound, $g(q)$, on the number of outputs any $q$ inputs can generate.

\item
If  $g(q)/q$ is monotonically increasing in $q$ then we can compute the replication rate
using our recipe.

\item Suppose  the maximum number of inputs any reducer can take
is $q$.
Then the replication rate is $r\geq \frac{q|O|}{g(q) |I|}$.

\end{itemize}

\subsection{Our Results}

We summarize our results in two tables.

Table~\ref{tab:lowerbounds} gives the lower bounds for each problem we obtain. The table enumerates for each problem the total number of inputs $|I|$, number of outputs $|O|$, the upper bound $g(q)$ on the number of outputs $q$ inputs can generate for each problem, and the lower bound we derived.

\begin{table*}[t]
\centering
\begin{tabular}{ | l | c | c | c |  c |}
\hline
{\bf Problem} &$|I|$& $|O|$ & $g(q)$ & Lower bound on $r$  \\ \hline\hline
{\bf Hamming-Distance-1, $b$-bit strings} & $2^b$ &  $\frac{b2^b}{2}$ & $\frac{q\log_2 q}{2}$ (Section~\ref{output-bound-subsect}) & $\frac{b}{\log_2q}$   (Section~\ref{hd-trade-subsect})\\ \hline
{\bf Triangle-Finding, $n$ nodes} & $\frac{n^2}{2}$ & $\frac{n^3}{6}$ & $\frac{\sqrt{2}}{3} q^{\frac{3}{2}}$ (Section~\ref{tri-subsect}) & $\frac{n}{\sqrt{2q}}$ (Section~\ref{tri-subsect}) \\ \hline
{\bf Sample Graphs (size $s$ nodes) in Alon} & $\binom{n}{2}$ or $m$ & $n^s$ & $q^{s/2}$ & $(\frac{n}{\sqrt{q}})^{s-2}$ or $(\sqrt{\frac{m}{q}})^{s-2}$ \\
{\bf Class in graph of $m$ edges, $n$ nodes} & & & (Section~\ref{alon-lower-subsect}) & (Sections~\ref{alon-lower-subsect} and~\ref{alon-m-subsect}) \\ \hline
{\bf 2-Paths in $n$-node graph} & $\binom{n}{2}$ & $\frac{n^3}{2}$ & $\binom{q}{2}$ (Section~\ref{2-path-lower-bound}) & $\frac{2n}{q}$ (Section~\ref{2-path-lower-bound}) \\ \hline
{\bf Multiway Join: $N$ bin. rels, $m$ vars., } & $N\binom{n}{2}$ & $\binom{n}{m}$ & $q^{\rho}$ (\cite{Atserias}) & $\frac{n^{m-2}}{q^{\rho -1}}$ (Section~\ref{multiway-lower-bound}) \\
{\bf Dom. $n$, parameter $\rho$ from~\cite{Atserias}} & & & & \\ \hline
{\bf $n\times n$ Matrix Multiplication} & $2n^2$ & $n^2$ & $\frac{q^2}{4n^2}$ (Section~\ref{mm-rr-subsect}) & $\frac{2n^2}{q}$ (Section~\ref{mm-rr-subsect}) \\ \hline
\end{tabular}
\caption{\label{tab:lowerbounds} Lower bound on replication rate $r$ for various problems in terms of number of inputs $|I|$, number of outputs $|O|$, and maximum number of inputs per reducer $q$.}
\end{table*}

Table~\ref{tab:upperbounds} gives the upper bound on the replication rate for each problem. In several cases our upper bounds are derived using multiple constructive algorithms, giving different upper bounds depending on the input parameters. Therefore, Table~\ref{tab:upperbounds} only gives a representative upper bound for each problem, with a forward reference to the section in which more detailed results are present.

\begin{table*}[t]
\centering
\begin{tabular}{ | l | c | c | c |  c |}
\hline
{\bf Problem} & Upper bound on $r$ \\ \hline\hline
{\bf Hamming-Distance-1} $b$-bit strings  & $\frac{b}{\log_2q}$ (Section~\ref{hd-upper-subsect})\\ \hline
{\bf Triangle-Finding, $n$ nodes} & ${\cal O}(\frac{n}{\sqrt{2q}})$   (Section~\ref{sparse-tri-subsection} and~\cite{AFU12,SV11}) \\ \hline
{\bf Sample Graphs (size $s$ nodes) in Alon} & ${\cal O}((\sqrt{\frac{m}{q}})^{s-2})$ (Result from~\cite{AFU12}) \\
{\bf Class in graph of $m$ edges, $n$ nodes} & \\ \hline
{\bf 2-Paths in $n$-node graph} & ${\cal O}(\frac{2n}{q})$ (Section~\ref{2-path-upper-bound}) \\ \hline
{\bf Multiway Join: $N$ rels, $m$ vars., Dom.} & {\bf Chain join:} $(n/\sqrt{q})^{N-1}$ \\  
{\bf $n$} {\bf (Section~\ref{multiway-upper-bound})} & {\bf Star join:} fact, dim. sizes $f$, $d_0$: $\frac{Nd_0(Nd_0/q)^{N-1}}{f+Nd_0}$ \\ \hline
{\bf $n\times n$ Matrix Multiplication} & $\frac{2n^2}{q}$ for $q\geq 2n^2$ (Section~\ref{mm-upper-subsect} and~\cite{MC69}) \\ \hline
\end{tabular}
\caption{\label{tab:upperbounds} Representative upper bound on the replication rate $r$ for each problem considered in this paper. This table only presents a representative upper bound, with a forward reference to the section that derives all upper bounds with constructive algorithms for each problem.}
\end{table*}

\section{The Hamming-Distance-1 Problem}
\label{hd1-sect}

We begin with the tightest result we can offer.  For the problem of finding pairs of bit strings of length $b$ that are at Hamming distance 1, we have a lower bound on the replication rate $r$ as a function of $q$, the maximum number of inputs assigned to a reducer.  This bound is essentially best possible, as we shall point to a number of mapping schemas that solve the problem and have exactly the replication rate stated in the lower bound.

\subsection{Bounding the Number of Outputs}
\label{output-bound-subsect}

As described in Section~\ref{sum-strat-subsect}, our first task is to develop a tight upper bound on the number of outputs that can be covered by a reducer of size $q$.

\begin{lemma}
\label{hd-outputs-lemma}
For the Hamming-distance-1 problem, a reducer of size $q$ can cover no more than
$(q/2)\log_2q$
outputs.
\end{lemma}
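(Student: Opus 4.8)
The plan is to recognize that this statement is precisely an edge-isoperimetric inequality on the Boolean hypercube. A reducer assigned a set $S$ of $q$ strings covers exactly those output pairs both of whose strings lie in $S$ and which differ in a single bit; that is, it covers the edges of the hypercube $Q_b = \{0,1\}^b$ induced by the vertex set $S$. So the lemma asks for an upper bound on the number of induced edges of a $q$-element subset of the hypercube. The extremal configuration is a subcube: when $q = 2^k$, a $k$-dimensional subcube has exactly $k\,2^{k-1} = \frac{q}{2}\log_2 q$ edges, and every vertex has degree $\log_2 q$, so this is the value we must show is best possible.

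First I would set up a recursion by splitting on one coordinate. Let $f(q)$ denote the maximum number of induced hypercube edges over all sets of $q$ binary strings (of any fixed length). Given such a set $S$, partition it according to the value of the first bit into $S_0$ and $S_1$ with $|S_0| = q_0$, $|S_1| = q_1$, and $q_0 + q_1 = q$. Each induced edge is either internal to $S_0$, internal to $S_1$, or a ``cross'' edge that flips only the first bit. Deleting the first coordinate embeds $S_0$ and $S_1$ into $Q_{b-1}$, so the internal edges number at most $f(q_0)$ and $f(q_1)$ respectively, while the cross edges match $0w$ with $1w$ and hence form a partial matching of size at most $\min(q_0,q_1)$. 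This yields the recursion
\[
f(q) \le \max_{q_0+q_1=q}\bigl[\, f(q_0) + f(q_1) + \min(q_0,q_1)\,\bigr].
\]

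Then I would prove by induction on $q$ that $f(q) \le \frac{q}{2}\log_2 q$, the base cases $q \in \{1,2\}$ being immediate. Substituting the inductive hypothesis, and taking $q_0 \le q_1$, the step reduces to verifying
\[
\tfrac{1}{2} q_0 \log_2 q_0 + \tfrac{1}{2} q_1 \log_2 q_1 + q_0 \le \tfrac{1}{2}(q_0+q_1)\log_2(q_0+q_1).
\]
Writing $\alpha = q_0/(q_0+q_1) \in (0,\tfrac12]$ and dividing through by $q_0+q_1$, this is exactly the inequality $2\alpha \le H(\alpha)$, where $H(\alpha) = -\alpha\log_2\alpha - (1-\alpha)\log_2(1-\alpha)$ is the binary entropy. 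The main obstacle is this last inequality, and it is where all the content sits: since $H$ is concave on $[0,\tfrac12]$ with $H(0)=0$ and $H(\tfrac12)=1$, the chord joining these endpoints is precisely the line $2\alpha$, and a concave function dominates its chords, giving $H(\alpha) \ge 2\alpha$. This closes the induction and establishes $f(q) \le \frac{q}{2}\log_2 q$.

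An alternative route, which avoids the explicit convexity step, is an entropy argument: let $X$ be uniform on $S$, so $H(X)=\log_2 q$, and observe that the number of induced edges in direction $i$ equals $\tfrac{q}{2}\,H(X_i \mid X_{-i})$, where $X_{-i}$ denotes all coordinates but the $i$-th (each such edge contributes a conditional entropy of $1$ bit at the corresponding value of $X_{-i}$). Summing over $i$ and using $H(X_i\mid X_{-i}) \le H(X_i \mid X_1,\dots,X_{i-1})$ together with the chain rule gives $\sum_i H(X_i\mid X_{-i}) \le H(X) = \log_2 q$, hence at most $\frac{q}{2}\log_2 q$ edges in total. Either argument delivers the stated bound, with the subcube showing it is tight at powers of two.
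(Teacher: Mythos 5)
Your main argument follows the same route as the paper's: identify the covered outputs with the edges of the hypercube induced by the reducer's input set, split that set on one coordinate, bound the cross edges by $\min(q_0,q_1)$ because they form a partial matching, and close an induction with the inequality $\tfrac{q_0}{2}\log_2 q_0+\tfrac{q_1}{2}\log_2 q_1+q_0\le\tfrac{q_0+q_1}{2}\log_2(q_0+q_1)$. The only divergence on this route is how that inequality is verified: the paper checks equality at $q_0=q_1$ and then compares derivatives with respect to the larger part, while you normalize to $2\alpha\le H(\alpha)$ and invoke concavity of the binary entropy against the chord from $(0,0)$ to $(\tfrac12,1)$; both verifications are correct, and yours is arguably tidier. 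One small wrinkle in your write-up: you induct on $q$ rather than on the string length $b$, so if the set happens to be constant in the chosen coordinate the recursion degenerates to $f(q)\le f(0)+f(q)+0$ and does not reduce $q$; fix this either by splitting on a coordinate in which the set is non-constant (one exists whenever $q\ge2$ and the strings are distinct) or by inducting on $b$ as the paper does, so the inductive hypothesis applies to a part of size $q$ living in a shorter cube. Your alternative entropy argument is a genuinely different proof: it replaces the induction entirely by the identity that the number of direction-$i$ edges equals $\tfrac{q}{2}H(X_i\mid X_{-i})$ together with the chain-rule bound $\sum_i H(X_i\mid X_{-i})\le H(X)=\log_2 q$, and it is correct. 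What it buys is a one-shot, non-inductive derivation that generalizes readily to related isoperimetric and Shearer-type bounds; what the paper's (and your primary) argument buys is elementary self-containment, using nothing beyond calculus or convexity.
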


\begin{proof}
The proof is an induction on $b$, the length of the bit strings in the input.  The basis is $b=1$.  Here, there are only two strings, so $q$ is either 1 or 2.  If $q=1$, the reducer can cover no outputs.  But $(q/2)\log_2q$ is 0 when $q=1$, so the lemma holds in this case.  If $q=2$, the reducer can cover at most one output.  But $(q/2)\log_2q$ is 1 when $q=2$, so again the lemma holds.

Now let us assume the bound for $b$ and consider the case where the inputs consist of strings of length $b+1$.  Let $X$ be a set of $q$ bit strings of length $b+1$.  Let $Y$ be the subset of $X$ consisting of those strings that begin with 0, and let $Z$ be the remaining strings of $X$~-- those that begin with 1.  Suppose $Y$ and $Z$ have $y$ and $z$ members, respectively, so $q=y+z$.

An important observation is that for any string in $Y$, there is at most one string in $Z$ at Hamming distance 1.  That is, if $0w$ is in $Y$, it could be Hamming distance 1 from $1w$ in $Z$, if that string is indeed in $Z$, but there is no other string in $Z$ that could be at Hamming distance 1 from $0w$, since all strings in $Z$ start with 1.  Likewise, each string in $Z$ can be distance 1 from at most one string in $Y$.  Thus, the number of outputs with one string in $Y$ and the other in $Z$ is at most $\min(y,z)$.

So let us count the maximum number of outputs that can have their inputs within $X$.  By the inductive hypothesis, there are at most $(y/2)\log_2y$ outputs both of whose inputs are in $Y$, at most $(z/2)\log_2z$ outputs both of whose inputs are in $Z$, and, by the observation in the paragraph above, at most $\min(y,z)$ outputs with one input in each of $Y$ and $Z$.

Assume without loss of generality that $y\le z$.  Then the maximum number of strings of length $b+1$ that can be covered by a reducer with $q$ inputs is
$$\frac{y}{2} \log_2y + \frac{z}{2} \log_2z + y$$
We must show that this function is at most $(q/2)\log_2q$, or, since $q=y+z$, we need to show
\begin{equation}
\label{yz-eq}
\frac{y}{2} \log_2y + \frac{z}{2} \log_2z + y \le \frac{y+z}{2} \log_2(y+z)
\end{equation}
under the condition that $z\ge y$.

First, observe that when $y=z$, Equation~\ref{yz-eq} holds with equality.  That is, both sides become $y(\log_2y + 1)$.  Next, consider the derivatives, with respect to $z$, of the two sides of Equation~\ref{yz-eq}.  $d/dz$ of the left side is
$$\frac{1}{2}\log_2z + \frac{\log_2e}{2}$$
while the derivative of the right side is
$$\frac{1}{2}\log_2(y+z) + \frac{\log_2e}{2}$$
Since $z\ge y\ge0$, the derivative of the left side is always less than or equal to the derivative of the right side.  Thus, as $z$ grows larger than $y$, the left side remains no greater than the right.  That proves the induction step, and we may conclude the lemma.
\end{proof}

\subsection{Lower Bound for Hamming Distance 1}
\label{hd-trade-subsect}

We can use Lemma~\ref{hd-outputs-lemma} to get a lower bound on the replication rate as a function of $q$, the maximum number of inputs at a reducer.

\begin{theorem}
\label{hd-trade-thm}
For the Hamming-distance-1 problem with inputs of length $b$, the replication rate $r$ is at least $b/\log_2q$.
\end{theorem}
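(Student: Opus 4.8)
The plan is to apply the generic lower-bound recipe of Section~\ref{sum-strat-subsect} directly, since the two hard ingredients are already in hand. From Example~\ref{hd1-ex} I take the input and output counts $|I| = 2^b$ and $|O| = (b/2)2^b$, and from Lemma~\ref{hd-outputs-lemma} I take the output-covering bound $g(q) = (q/2)\log_2 q$. With these three quantities fixed, the theorem should follow by a single substitution into Equation~\ref{recipe3-eq}, so the only real work is to confirm that the recipe is applicable and then to simplify.

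First I would verify the hypothesis needed to pass from Equation~\ref{recipe1-eq} to Equation~\ref{recipe2-eq}, namely that $g(q)/q$ is monotonically increasing in $q$. Here $g(q)/q = \tfrac{1}{2}\log_2 q$, which is manifestly increasing, so the manipulation is legitimate. Concretely, I start from the covering inequality $\sum_{i=1}^p g(q_i) \ge |O|$, isolate one factor of $q_i$ inside each term, use $q_i \le q$ together with the monotonicity of $g(q)/q$ to replace $g(q_i)/q_i$ by $g(q)/q$, and finally divide through by $|I|$. This yields the recipe's conclusion $r \ge q|O|/\bigl(g(q)|I|\bigr)$.

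It then remains to plug in and simplify:
\[
r \;\ge\; \frac{q \cdot (b/2)2^b}{(q/2)\log_2 q \cdot 2^b} \;=\; \frac{b}{\log_2 q},
\]
which is exactly the claimed lower bound. The computation is routine; all the genuine content sits in Lemma~\ref{hd-outputs-lemma}, whose induction supplies the tight form $g(q)=(q/2)\log_2 q$ that makes this bound come out clean. The only step requiring a moment's care is the monotonicity check on $g(q)/q$, since that is the precise condition under which the recipe's factoring is valid; once it is noted, there is no further obstacle.
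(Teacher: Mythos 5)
Your proposal is correct and follows the same route as the paper: both apply the four-step recipe with $g(q)=(q/2)\log_2 q$, $|I|=2^b$, $|O|=(b/2)2^b$, replace $\log_2 q_i$ by $\log_2 q$ (justified by monotonicity of $g(q)/q$), and rearrange to obtain $r\ge b/\log_2 q$. Your explicit check that $g(q)/q=\tfrac{1}{2}\log_2 q$ is increasing is a welcome, slightly more careful statement of the step the paper performs implicitly.
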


\begin{proof}
Suppose there are $p$ reducers, and the $i$th reducer has $q_i \le q$ inputs. We apply our four step recipe described in Section~\ref{sum-strat-subsect}:
\begin{enumerate}
\item {\bf Deriving $g(q)$:} Recall that $g(q)$ is the maximum number of  outputs a reducer can cover with q inputs. By Lemma~\ref{hd-outputs-lemma}, $g(q)=(q/2)\log_2q$

\item {\bf Number of Inputs and Outputs:} There are $2^b$ bitstrings of length b. The total number of outputs is $(b/2)2^b$. Therefore  $|I|=2^{b}$ and $|O|=(b/2)2^b$.

\item {\bf $\sum_{i=1}^pg(q_i) \geq |O|$ Inequality:} Substituting for $g(q_i)$ and $|O|$ from above:
\begin{equation}
\label{hd-trade-eq}
\sum_{i=1}^p\frac{q_i}{2}\log_2q_i \ge \frac{b}{2}2^b
\end{equation}

\item {\bf Replication Rate:} Finally we employ the manipulation trick from Section~\ref{sum-strat-subsect}, where we arrange the terms of this inequality so that the left side is the replication rate.  Recall we must separate a factor $q_i$ from other factors involving $q_i$ by replacing all other occurrences of $q_i$ on the left by the upper bound $q$.  That is, we replace $\log_2 q_i$ by $\log_2q$ on the left of Equation~\ref{hd-trade-eq}.  Since doing so can only increase the left side, the inequality continues to hold:
\begin{equation}
\label{hd-trade2-eq}
\sum_{i=1}^p\frac{q_i}{2}\log_2q \ge \frac{b}{2}2^b
\end{equation}
The replication rate is $r = \sum_{i=1}^pq_i/|I|=\sum_{i=1}^pq_i/2^b$. We can move factors in Equation~\ref{hd-trade2-eq} to get a lower bound on $r = \sum_{i=1}^pq_i/2^b \ge b/\log_2q$, which is exactly the statement of the theorem.
\end{enumerate}
\end{proof}

\subsection{Upper Bound for Hamming Distance 1}
\label{hd-upper-subsect}

There are a number of algorithms for finding pairs at Hamming distance 1 that match the lower bound of Theorem~\ref{hd-trade-thm}.  First, suppose $q=2$; that is, every reducer gets exactly 2 inputs, and is therefore responsible for at most one output.  Theorem~\ref{hd-trade-thm} says the replication rate $r$ must be at least $b/\log_22 = b$.  But in this case, every input string $w$ of length $b$ must be sent to exactly $b$ reducers~-- the reducers corresponding to the pairs consisting of $w$ and one of the $b$ inputs that are Hamming distance 1 from $w$.

There is another simple case at the other extreme.  If $q = 2^b$, then we need only one reducer, which gets all the inputs.  In that case, $r=1$.  But Theorem~\ref{hd-trade-thm} says that $r$ must be at least $b/\log_2(2^b) = 1$.

In \cite{ADMPU12}, there is an algorithm called Splitting that, for the case of Hamming distance 1 uses $2^{1+b/2}$ reducers, for some even $b$.  Half of these reducers, or $2^{b/2}$ reducers correspond to the $2^{b/2}$ possible bit strings that may be the first half of an input string.  Call these {\em Group I reducers}. The second half of the reducers correspond to the $2^{b/2}$ bit strings that may be the second half of an input.   Call these {\em Group II reducers}.  Thus, each bit string of length $b/2$ corresponds to two different reducers.

An input $w$ of length $b$ is sent to 2 reducers: the Group-I reducer that corresponds to its first $b/2$ bits, and the Group-II reducer that corresponds to its last $b/2$ bits.  Thus, each input is assigned to two reducers, and the replication rate is 2.  That also matches the lower bound of $b/\log_2(2^{b/2}) = b/(b/2) = 2$.  It is easy to observe that every pair of inputs at distance 1 is sent to some reducer in common.  These inputs must either agree in the first half of their bits, in which case they are sent to the same Group-I reducer, or they agree on the last half of their bits, in which case they are sent to the same Group-II reducer.

We can generalize the Splitting Algorithm so that for any $c>2$ such that $c$ divides $b$ evenly, we can have reducer size $2^{b/c}$ and replication rate $c$. Note that for reducer size $2^{b/c}$,  the lower bound on the replication rate is exactly \linebreak$b/\log_2(2^{b/c})=c$. We split each bit string $w$ into $c$ segments, $w_1w_2\cdots w_c$, each of length $b/c$.  We will have $c$ groups of reducers, numbered 1 through $c$. There will be $2^{b-b/c}$ reducers in each group, corresponding to each of the $2^{b-b/c}$ bit strings of length $b-b/c$. For $i=1,...,c$, we map $w$ to the Group-$i$ reducer that corresponds to bit string $w_1\cdots w_{i-1}w_{i+1}\cdots w_c$, that is, $w$ with the $i$th substring $w_i$ removed.  Thus, each input is sent to $c$ reducers, one in each of the $c$ groups, and the replication rate is $c$. Finally, we need to argue that the mapping schema solves the problem.  Any two strings $u$ and $v$ at Hamming distance 1 will disagree in only one of the $c$ segments of length $b/c$, and will agree in every other segment. If they disagree in their $i$th segments, then they will be sent to the same Group-$i$ reducer, because we map them to the Group-$i$ reducers ignoring the values in their $i$th segments. Thus, this Group-$i$ reducer will cover the output pair <$u, v$>.

\begin{figure}[htfb]
\centerline{\includegraphics[width=0.4\textwidth]{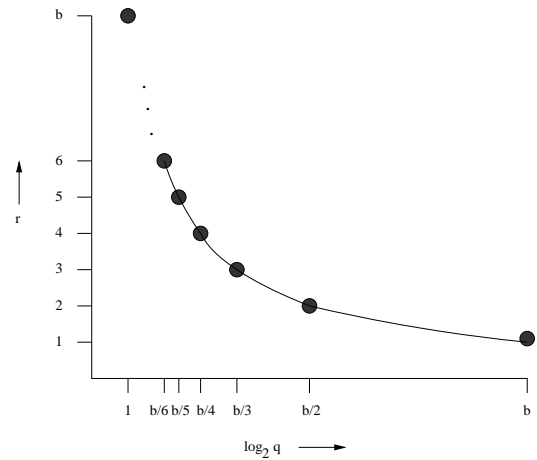}}
\caption{Known algorithms matching the lower bound on replication rate}
\label{tradeoff-fig}
\end{figure}

Figure~\ref{tradeoff-fig} illustrates what we know.  The hyperbola is the lower bound.  Known algorithms that match the lower bound on replication rate are shown with dots.

\subsection{An Algorithm for Large q}
\label{weights-subsect}

The lower bound in Fig.~\ref{tradeoff-fig} is matched for many values of $q$ as long as $\log_2q \le b/2$.  However, what happens between $b/2$ and $b$ is less clear.  Surely $r\le 2$ for that entire range.  In this subsection and the next we shall show that there are algorithms for $\log_2q$ near $b$ with replication rates strictly less than 2.

There is a family of algorithms that use reducers with large input~-- $q$ well above $2^{b/2}$, but lower that $2^b$.   The simplest version of these algorithms divides bit strings of length $b$ into left and right halves of length $b/2$ and organizes them by weights, as suggested by Fig.~\ref{weights-fig}.  The {\em weight} of a bit string is the number of 1's in that string.  In detail, for some $k$, which we assume divides $b/2$, we partition the weights into $b/(2k)$ groups, each with $k$ consecutive weights.  Thus, the first group is weights 0 through $k-1$, the second is weights $k$ through $2k-1$, and so on.  The last group has an extra weight, $b/2$, and consists of weights $\frac{b}{2}-k$ through $b/2$.

\begin{figure}[htfb]
\centerline{\includegraphics[width=0.4\textwidth]{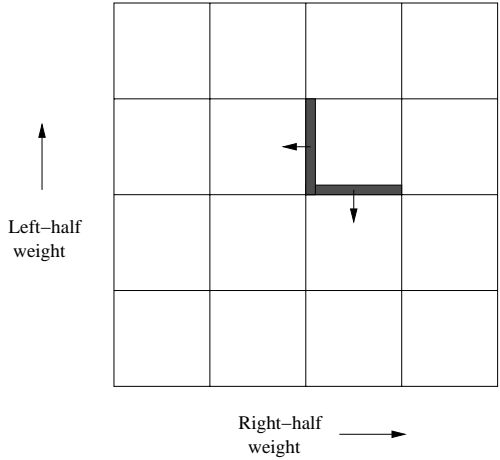}}
\caption{Partitioning by weight.  Only the border weights need to be replicated}
\label{weights-fig}
\end{figure}

There are $(\frac{b}{2k})^2$ reducers; each corresponds to a range of weights for the first half and a range of weights for the second half.  A string is assigned to reducer $(i,j)$, for $i,j=1,2,\ldots, b/2k$  if the left half of the string has weight in the range $(i-1)k$ through $ik-1$ and the  right half of the string has weight in the range $(j-1)k$ through $jk-1$.

Consider two bit strings $w_0$ and $w_1$ of length $b$ that differ in exactly one bit .  Suppose the bit in which they differ is in the left half, and suppose that $w_1$ has a 1 in that bit.  Finally, let $w_1$ be assigned to reducer $R$.  Then unless the weight of the left half of $w_1$ is the lowest weight for the left half that is assigned to reducer $R$, $w_0$ will also be at $R$, and therefore $R$ will cover the pair $\{w_0,w_1\}$.  However, if the weight of $w_1$ in its left half is the lowest possible left-half weight for $R$, then $w_0$ will be assigned to the reducer with the same range for the right half, but the next lower range for the left half.  Therefore, to make sure that $w_0$ and $w_1$ share a reducer, we need to replicate $w_1$ at the neighboring reducer that handles $w_0$.  The same problem occurs if $w_0$ and $w_1$ differ in the right half, so any string whose right half has the lowest possible weight in its range also has to be replicated at a neighboring reducer.  We suggested in Fig.~\ref{weights-fig} how the strings with weights at the two lower borders of the ranges for a reducer need to be replicated at a neighboring reducer.

Now, let us analyze the situation, including the maximum number $q$ of inputs assigned to a reducer, and the replication rate.  For the bound on $q$, note that the vast majority of the bit strings of length $n$ have weight close to $n/2$.  The number of bit strings of weight exactly $n/2$ is $\binom{n}{n/2}$.  Stirling's approximation \cite{feller68} gives us $2^n/\sqrt{2\pi n}$ for this quantity.  That is, one in $O(\sqrt{n})$ of the strings have the average weight.

If we partition strings as suggested by Fig.~\ref{weights-fig}, then the most populous $k\times k$ cell, the one that contains strings with weight $b/4$ in the first half and also weight $b/4$ in the second half, will have no more than
$$k^2\Bigr(\frac{2^{b/2}}{\sqrt{2\pi(b/2)}}\Bigl)^2 = \frac{k^22^b}{\pi b}$$
strings assigned.\footnote{Note that many of the cells have many fewer strings assigned, and in fact a large fraction of the strings have weights within $\sqrt{b}$ of $b/4$ in both their left halves and right halves.  In the best implementation, we would combine the cells with relatively small population at a single compute node, in order to equalize the work at each node.}
If $k$ is a constant, then in terms of the horizontal axis in Fig.~\ref{tradeoff-fig}, this algorithm has $\log_2q$ equal to $b - \log_2b$ plus or minus a constant.  It is thus very close to the right end, but not exactly at the right end.

For the replication rate of the algorithm, if $k$ is a constant, then within any cell there is only a small ratio of variation, among  all pairs $(i,j)$ assigned to that cells, of the numbers of strings with weights $i$ and $j$ in the left and right halves, respectively.  Moreover, when we look at the total number of strings in the borders of all the cells, the differences average out, so the total number of replicated strings is very close to $(2k)/k^2 = 2/k$.  That is, a string is replicated if either its left half has a weight divisible by $k$ or its right half does.   Note that strings in the lower-left corner of a cell are replicated twice, strings of the other $2k-2$ points on the border are replicated once, and the majority of strings are not replicated at all.  We conclude that the replication rate is $1+\frac{2}{k}$.

\subsection{Generalization to  d Dimensions}
\label{weights-d-subsect}

The algorithm of Section~\ref{weights-subsect} can be generalized from 2 dimensions to $d$ dimensions.  Break bit strings of length $b$ into $d$ pieces of length $b/d$, where we assume $d$ divides $b$.  Each string of length $b$ can thus be assigned to a cell in a $d$-dimensional hypercube, based on the weights of each of its $d$ pieces.  Assume that each cell has side $k$ in each dimension, where $k$ is a constant that divides $b/d$.

The most populous cell will be the one that contains strings where each of its $d$ pieces has weight $b/(2d)$.  Again using Stirling's approximation, the number of strings assigned to this cell is
$$k^d\Bigl(\frac{2^{b/d}}{\sqrt{2\pi b/d}}\Bigr)^d = \frac{k^d2^b}{b^{d/2}(2\pi/d)^{d/2}}$$
On the assumption that $k$ is constant, the value of $\log_2q$ is
$$b-(d/2)\log_2b$$
plus or minus a constant.

To compute the replication rate, observe that every point on each of the $d$ faces of the hypercube that are at the low ends of their dimension must be replicated.  The number of points on one face is $k^{d-1}$, so the sum of the volumes of the faces is $dk^{d-1}$.  The entire volume of a cell is $k^d$, so the fraction of points that are replicated is $d/k$, and the replication rate is $1+d/k$.  Technically, we must prove that the points on the border of a cell have, on average, the same number of strings as other points in the cell.  As in Section~\ref{weights-subsect}, the border points in any dimension are those whose corresponding substring has a weight divisible by $k$.  As long as $k$ is much smaller than $b/d$, this number is close to $1/k$th of all the strings of that length.

\subsection{Larger Hamming Distances}
\label{large-hd-subsect}

Unfortunately, the analysis for Hamming distance 1 does not generalize easily to higher distances.  To see why, consider Hamming distance 2.  While for Hamming distance 1 we learned that there is an $O(q\log q)$ upper bound on the number of outputs covered by a reducer with $q$ inputs, for distance 2 this bound is much higher: $\Omega(q^2)$, at least for small $q$.  That prevents us from getting a good lower bound on replication rate.

The $\Omega(q^2)$ bound comes from an algorithm from \cite{ADMPU12} called ``Ball-2'' that creates one reducer for each string of length $b$.  For distance 2, this algorithm assigns to the reducer for string $s$ all those strings at distance 1 from $s$.  Notice that all distinct strings at distance 1 from $s$ are distance 2 from each other.  Thus, each reducer covers $\binom{b}{2}$ outputs.  Since $q=b$, each reducer covers $\binom{q}{2}$, or about $b^2/2$ outputs.

On the other hand, we can generalize the upper bound of Section~\ref{hd-upper-subsect} to distance $d$.  We divide the $b$ bits of input strings into $k$ equal-length pieces.  A reducer corresponds to a choice of $d$ of the $k$ pieces to delete and a bit string of length $b(1-d/k)$ corresponding to the $k-d$ pieces of a string that are not deleted..  An input $s$ is sent to $\binom{k}{d}$ reducers~-- those corresponding to the strings we obtain by deleting $d$ of the $k$ segments of string $s$.  Thus, the replication rate is approximately $k^d/d!$, assuming $k$ is much larger than $d$.  Again using Stirling's approximation for the factorial, this replication rate is approximately $r = (ek/d)^d$.

\section{Triangle Finding}
\label{triangle-sect}

We shall now consider the problem of finding triangles, introduced in Example~\ref{triangle-ex}.
We shall first derive a lower bound assuming that all possible edges in the data graph can be present.  That assumption follows our model, since we assume every possible output can be made, and every possible input could be present.  However, applications of triangle-finding, such as in analysis of communities in social networks are generally applied to large but sparse graphs.  As a result, we shall continue the analysis by showing how to adjust the bound $q$ on reducer size to take into account the fact that most inputs will not be present.  When we make this adjustment, we see that the lower bound we get matches, to within a constant factor, the upper bound obtained from known algorithms.

\subsection{Lower and Upper Bound for Finding Triangles}
\label{tri-subsect}

Recall that, as described in Example~\ref{triangle-ex}, the inputs are the possible edges of a graph, and the outputs are the triples of edges that form a triangle.
Suppose $n$ is the number of nodes of the input graph.  Following the recipe from Section~\ref{sum-strat-subsect}:

\begin{enumerate}

\item {\bf Deriving $g(q)$:}
We claim a reducer with $q$ inputs can cover at most  $\frac{\sqrt{2}}{3} q^{3/2}$ outputs (triangles), which happens when the reducer is sent all the edges among a set of  $k = \sqrt{2q}$ nodes.  This point was proved, to within an order of magnitude in \cite{SV11}, who in turn credit the thesis of Schank \cite{Schank07}.\footnote{What is actually proved is that among $q$ edges, you can form at most $O(q^{3/2})$ triangles.  However, picking a set of nodes and all edges among them will match this upper bound.}
Suppose we assign to a reducer all the edges among a set of $k$ nodes.  Then there are $\binom{k}{2}$ edges assigned to this reducer, or approximately $k^2/2$ edges.  Since this quantity is $q$, we have $k = \sqrt{2q}$.
The number of triangles among $k$ nodes is $\binom{k}{3}$, or approximately $k^3/6$ outputs.  In terms of $q$, the upper bound on the number of outputs is $\frac{\sqrt{2}}{3} q^{3/2}$.

\item {\bf Number of Inputs and Outputs:}
The number of inputs is $\binom{n}{2}$ or approximately $n^2/2$.  The number of outputs is $\binom{n}{3}$, or approximately $n^3/6$.

\item {\bf $\sum_{i=1}^pg(q_i) \geq |O|$ Inequality:}
So using the formulas from (1) and (2), if there are $p$ reducers each with $\leq q$ inputs:
\begin{equation}
\label{tri1-eq}
\sum_{i=1}^p\frac{\sqrt{2}}{3} q_i^{3/2} \ge n^3/6
\end{equation}
We can replace a factor of $\sqrt{q_i}$ on the left of Equation~\ref{tri1-eq} by $\sqrt{q}$, since $q\ge q_i$, and then move that factor to the denominator of the right side.  Thus,
\begin{equation}
\label{tri2-eq}
\sum_{i=1}^p\frac{\sqrt{2}}{3} q_i \ge n^3/6\sqrt{q}
\end{equation}

\item {\bf Replication Rate:}
The replication rate is $\sum_{i=1}^pq_i$ divided by the number of inputs, which is $n^2/2$ from (1).  We can manipulate Equation~\ref{tri2-eq} as per the trick in Section~\ref{sum-strat-subsect} to get
$$r = \frac{2\sum_{i=1}^p q_i}{n^2} \ge \frac{n}{\sqrt{2q}}$$

\end{enumerate}

{\bf Upper Bound:}
There are known algorithms that, to within a constant factor, match the lower bound on replication rate.  See \cite{SV11} and \cite{AFU12}.  These algorithms are stated in terms of the number of edges, $m$, rather than the number of nodes, $n$.  However, for the case $m=\binom{n}{2}$, which is what we assume when we consider all possible edges and triangles, these algorithms do in fact imply a replication rate that is $O(n/\sqrt{q})$.  We shall next consider how to modify the analysis on the assumption that the true input will consist of $m$ randomly chosen edges.

\subsection{Analysis for Sparse Data Graphs}
\label{sparse-tri-subsection}

The lower bound $r = \Omega(n/\sqrt{q})$ holds on the assumption that all edges are actually present in the input. But as we pointed out, commonly the data graph to which a triangle-finding algorithm is applied is sparse.  We shall show that, with essentially the same limitation $q$ on the number of edges that any reducer must deal with, the lower bound on replication rate can be transformed to $r = \Omega(\sqrt{m/q})$.

Suppose the data graph has $m$ of the possible $\binom{n}{2}$ edges, and that these edges are chosen randomly.  Then if we want no more than an expected value of $q$ for the number of edges input to any one reducer, we can actually assign a ``target'' $q_t = qn(n-1)/2m$ of the possible edges to one reducer and know that the expected number of edges that will actually arrive will be $q$.

We already know from Section~\ref{tri-subsect} that if we assign at most $q_t$ of the $\binom{n}{2}$ possible edges to any reducer, then the replication rate $r$ is $\Omega(n/\sqrt{q_t})$.  But on the assumption that only $m$ edges are truly present in the input, $q_t$ is $O(qn^2/m)$, from which we can conclude
$$r = \Omega(n/\sqrt{qn^2/m}) = \Omega(\sqrt{m/q})$$

This lower bound is met (to within a constant factor) by the algorithms of \cite{SV11} and \cite{AFU12} when we measure reducer size in terms of the number of edges $m$ (as these papers do), rather than in terms of the number of possible edges $\binom{n}{2}$.  There is a natural concern that a random selection of the edges will cause more than $q$ actual edges to be assigned to some of the reducers.  However, we are only claiming bounds to within a constant factor, and by lowering the target $q_t$ by, say, a factor of 2, we can make the probability that one or more reducers will get more than $q$ actual edges as low as we like for large $n$ and $m$.

\section{Finding Instances of Other Graphs}
\label{sample-graph-sect}

The analysis of Section~\ref{triangle-sect} extends to any {\em sample graph} whose instances we want to find in a larger {\em data graph}.  For each problem of this type, the sample graph is fixed, while the data graph is the input.  Previously, we looked only at the triangle as a sample graph, but we could similarly search for cycles of some length greater than 3, or for complete graphs of a certain size, or any other small graph whose instances in the data graph we wanted to find.

\subsection{The Alon Class of Sample Graphs}
\label{alon-subsect}

In \cite{Alon81}, Noga Alon analyzed the maximum number of occurrences of a sample graph that could occur in a data graph of $n$ nodes and $m$ edges.  In particular, he defined a class of graphs, which we shall call the {\em Alon class} of sample graphs.  These graphs have the property that we can partition the nodes into disjoint sets, such that the subgraph induced by each partition is either:

\begin{enumerate}

\item
A single edge between two nodes, or

\item
Contains an odd-length Hamiltonian cycle.

\end{enumerate}
The sample graph may have any other edges as well. The Alon class is very rich. Every cycle, every graph with a perfect matching, and every complete graph is in the Alon class. Paths of odd length are also in the Alon class, since we may use alternating edges along the path as a decomposition.  However, paths of even length are not in the Alon class, since there are no cycles of any length, and the odd number of nodes cannot be partitioned into disjoint edges.

\subsection{Lower Bound for the Alon Class}
\label{alon-lower-subsect}

The key result from \cite{Alon81} that we need is that for any sample graph $S$ in the Alon class, if $S$ has $s$ nodes, then the number of instances of $S$ in a graph of $m$ edges is $O(m^{s/2})$.  So if the $i$th reducer has $q_i$ inputs, the number of instances of $S$ that it can find is $O(q_i^{s/2})$.  But if all edges are present, the number of instances of $S$ is $\Omega(n^s)$.  Note the number of instances need not be exactly $n^s$, since there may be symmetries in $S$ as we saw for the case of the triangle.  However, there are surely at least $n^s/s!$ distinct sets of nodes that form the sample graph $S$.

Now, we can repeat the analysis we did for the triangle.  If there are $p$ reducers, and the $i$th reducer has $q_i$ inputs, then
$$\sum_{i=1}^p q_i^{s/2} = \Omega(n^s)$$
If $q$ is an upper bound on $q_i$, we can write the above as
$$\sum_{i=1}^p q_i q^{(s/2)-1} = \Omega(n^s)$$

The number of inputs is $\binom{n}{2}$.
Thus, the replication rate $r$ is
$$r = \frac{\sum_{i=1}^p q_i}{\binom{n}{2}} = \Omega(n^{s-2}/q^{(s-2)/2}) = \Omega\bigl((n/\sqrt{q})^{s-2}\bigr)$$

\subsection{Bounds in Terms of Edges}
\label{alon-m-subsect}

As we did for triangles, we can scale $q$ up by a factor of $n^2/m$ if we assume that the actual data is $m$ out of the $\binom{n}{2}$ possible edges.  If we do so, the lower bound on $r$ becomes
$$r =\Omega\Bigl(\bigl(n/\sqrt{(qn^2/m)}\bigr)^{s-2}\Bigr) = \Omega\bigl((\sqrt{m/q})^{s-2}\bigr)$$
The algorithm given in \cite{AFU12} matches this lower bound, to within a constant factor.

\subsection{Paths of Length Two}
\label{2-path-subsect}

The analysis for sample graphs not in the Alon class is harder, and we shall not try to give a general rule.  However, to see the problems that arise, we will look at the simplest non-Alon graph: the path of length 2 ({\em 2-paths}).  The problem of finding 2-paths is similar, although not identical to, the problem of computing a natural self join
$$E(A,B)\bowtie E(B,C)$$
The difference is that the edge relation $E$ contains sets of two nodes, rather than ordered pairs.  That is, if a tuple $(u,v)$ is in $E$, when finding 2-paths we can treat it as $(v,u)$, even if the latter tuple is not found in $E$.

\subsubsection{Lower Bound}
\label{2-path-lower-bound}

We again follow the recipe from Section~\ref{sum-strat-subsect}:

\begin{enumerate}
\item {\bf Deriving $g(q)$:}
Any two distinct edges can be combined to form at most one 2-path.  Thus, the number of outputs (2-paths) covered by this reducer is at most $\binom{q}{2}$ or approximately $q^2/2$.

\item {\bf Number of Inputs and Outputs:}
$|I|$ is $\binom{n}{2}$ or approximately $n^2/2$. For counting $|O|$, observe that there are $\binom{n}{3}$ sets of three nodes, and any three nodes can form a 2-path in three ways.  That is, any of the three nodes can be chosen to be the middle node.  Thus, $|O|$ is $3\binom{n}{3}$, or approximately $3n^3/6 = n^3/2$.

\item {\bf $\sum_{i=1}^pg(q_i) \geq |O|$ Inequality:}
Using the formulas from (1) and (2), if there are $p$ reducers each with $\leq q$ inputs:
\begin{equation}
\label{2path1-eq}
\sum_{i=1}^pq_i^2/2 \ge n^3/2
\end{equation}
Replacing a factor of $q_i$ by $q$ on the left:
\begin{equation}
\label{2path2-eq}
\sum_{i=1}^p (q_i)(q/2) \ge n^3/2
\end{equation}

\item {\bf Replication Rate:}
We rearrange terms in Equation~\ref{2path2-eq} to make the left side equal to $\sum_{i=1}^pq_i$ divided by $|I|=n^2/2$.
$$r = \frac{\sum_{i=1}^p q_i}{n^2/2} \ge 2n/q$$
\end{enumerate}

This lower bound on replication rate is unlike those we have seen before.  For small $q$ it makes sense, but for $q>2n$ it is less than 1, which is useless.  Rather, it should be replaced by the trivial lower bound $r\ge1$ for large $n$.  Once we make this replacement, the bound is tight for an infinite number of pairs of $q$ and $n$.
If $q=n^2/2$, then we can send all edges to one reducer and do the work there, so $r=1$ is correct.

\subsubsection{Upper Bound}
\label{2-path-upper-bound}

If $q=n$, then we can have one reducer for each node.  We send the edge $(a,b)$ to the reducers for its two nodes $a$ and $b$.  The replication rate is thus 2, which agrees with the lower bound.  The reducer for node $u$ receives all edges consisting of $u$ and another node, so it can put them together in all possible ways and produce all 2-paths that have $u$ as the middle node.

If $q<n$, we have to divide the task of producing the 2-paths with middle node $u$ among several different reducers.  That means every pair of edges with $u$ as one end has to be assigned to some reducer in common.  Suppose for convenience that $k^2$ divides $n$.  Suppose $h$ is a hash function that divides the $n$ nodes into $k$ equal-sized buckets.  The reducers will correspond to pairs $[u,\{i,j\}]$, where $u$ is a node (intended to be the node in the middle of the 2-path), and $i$ and $j$ are bucket numbers in the range $1,2,\ldots,k$.  There are thus $n\binom{k}{2}$ or approximately $nk^2/2$ reducers.

Let $(a,b)$ be an edge.  We send this edge to the $2(k-1)$ reducers $[b,\{h(a),*\}]$ and $[a,\{*,h(b)\}]$, where $*$ denotes any bucket number from 1 to $k$ other than the other bucket number in the set.  We claim that any 2-path is covered by at least one reducer.  In particular, look at the reducer $[u,\{i,j\}]$.  This reducer covers all 2-paths $v-u-w$ such that $h(v)$ and $h(w)$ are each either $i$ or $j$.  Note that if $h(v)=h(w)$, then many reducers will cover this 2-path, and we want only one to produce it.  So we let the reducer $[u,\{i,j\}]$ produce the 2-path
$v-u-w$ if either

\begin{enumerate}

\item
One of $h(v)$ and $h(w)$ is $i$ and the other is $j$, or

\item
$h(v)=h(w)=i $ and $j=i+1$ modulo $k$ (i.e., $j=i+1\le k$ or $i=k$ and $j=1$).

\end{enumerate}

Each reducer receives $q = 2n/k$ edges, and as mentioned, the replication rate $r$ is $2(k-1)$, or approximately $2k$.  Since $2n/q = k$, the lower bound is approximately half what this algorithm ach\-ieves.  Thus, to within a constant factor, the upper and lower bounds match for small $q$ as well as for large $q$ (where both bounds are between 1 and 2).

\subsection{Multiway Join}
\label{multiway-join-subsect}

We begin by looking at the join of several binary relations.
We can think of this extension as looking for sample graphs in a data graph with labeled edges; the relation names are the edge labels.
Suppose $n$ is the number of nodes of the data graph.  
The inputs are the possible edges of a graph, and the outputs are the sets of $s$ edges that make the body of the multiway join true (i.e., the $s$ labeled edges of the sample graph). We assume also that the multiway join seen as a Datalog rule, (or
as a hypergraph) uses $m$ variables ($m$ attributes/nodes in the hypergraph equivalently).

\subsubsection{A Lower Bound for Multiway Join}
\label{multiway-lower-bound}

Following the recipe from Section~\ref{sum-strat-subsect}:
\begin{enumerate}

\item {\bf Deriving $g(q)$:}
According to \cite{Atserias} when we have $q$ inputs in a multiway join, then
we can have at most $g(q)=q^{\rho}$ outputs where $\rho$ is a parameter that
depends on properties of the hypergraph associated with the specific multiway
join. E.g., if the hypergraph has $\rho _1$ edges that cover all the nodes,
and this is the minimum number of edges with this property, then $\rho =\rho _1$.
Otherwise, $\rho$ comes from the solution of a linear program that is associated
to the hypergraph (see \cite{Atserias} for details of how to compute $\rho$).
From here on, we drop constant factors, but do not use the implied big-oh notation, for simplicity.

\item {\bf Number of Inputs and Outputs:}
$|I|=s\binom{n}{2}$
or on the order of $n^2$.
$|O|=\binom{n}{m}$
or on the order of $n^m$.
Note that $m$ here is a constant, so in big-oh calculations we can drop the factor $1/m!$ when approximating binomial coefficients.

\item {\bf $\sum_{i=1}^pg(q_i) \geq |O|$ Inequality:}
Replacing for $g(q)$ and $|O|$ from above:
\begin{equation}
\label{tri1-eq-mj}
\Sigma_{i=1}^p q_i^{\rho} \geq n^m
\end{equation}
We can replace a factor of $q_i^{\rho -1}$ on the left of Equation~\ref{tri1-eq-mj} by $q^{\rho -1}$, since $q\geq q_i$, and then move that factor to the denominator of the right side.  Thus,
\begin{equation}
\label{tri2-eq-mj}
\Sigma_{i=1}^p q_i \geq n^m/q^{\rho -1}
\end{equation}

\item {\bf Replication Rate:}
The replication rate is $\sum_{i=1}^pq_i$ divided by the number of inputs, which is $n^m$ from (1).  We can manipulate Equation~\ref{tri2-eq-mj} as per the trick in Section~\ref{sum-strat-subsect} to get
$$r = \frac{\sum_{i=1}^p q_i}{n^2} \ge \frac{n^{m-2}}{q^{\rho -1}}$$

\end{enumerate}

This lower bound can be easily generalized from binary relations to the case where all relations
have the same arity $\alpha\ge2$. In order to have a more quantitative
picture let us assume also that $\rho=s/\alpha$ where $s$ is the number of
relational atoms in the join. Then the replication rate lower bound is:
$$r \ge \frac{n^{m-\alpha}}{q^{s/\alpha -1}}$$
To get a more quantitative picture, we can take the special case where $s=m$,
i.e., when the number of relational atoms in the join and the number of shared variables
coincide (e.g., the join corresponds to a hypertree with an additional edge).
In this case
the lower bound becomes
$r\geq n^{m-\alpha}q^{1-m/\alpha}$.

Below we discuss in detail some algorithms from the literature for chain joins
that match this lower bound.

\subsubsection{Upper Bound for Cases of Multiway Join}
\label{multiway-upper-bound}

{\bf Chains of odd number of relations}.  Suppose we have $N$ relations
in the chain and $N$ is an odd positive integer. Then, let us compute
more carefully the above lower bound. We have now $m=N+1$ and $\rho= (N+1)/2$. Hence
the lower bound is:
$$r\geq \frac{n^{N-1}}{q^{(N+1)/2 -1}}=(n/\sqrt{q})^{N-1}$$

For the upper bound we use the results in \cite{AU10}.
This paper computes the communication cost
for when we have $p$ reducers (denoted $k$ in \cite{AU10}), each relation has size $R$ and there are $N$ relations in the join,
hence the total input
size is $|I|=RN$. In \cite{AU10} the expression that gives the communication cost is given as the sum of $N$ terms, each a product of shares (denoted $a_i$'s in \cite{AU10}). The $a_i$'s are computed in there and if we do the arithmetic,
we get communication cost per input (hence replication rate) to be equal to (up to a
factor of $p^{\frac{4}{N^2-1}}$):
$r=p^{\frac{N-3}{N-1}}$. After similar arithmetic manipulations as in previous
sections, we get this upper bound on the replication rate to be:
$$r=(n/\sqrt{q})^{N-1}$$
This upper bound matches the lower bound we computed. The case for even number
of relations
in a chain query is similar with the same conclusion.

{\bf Star joins}.  A star join joins a central {\em fact table} with several {\em dimension tables}. It is expected that the fact table is very
large while the dimension tables are smaller but still large. Suppose the size of the fact table is $f$, and all dimension tables have the same size
$d_0$. Then according to \cite{AU10}, in order to minimize the communication cost, the share for the attributes not in the fact table is 1, while, the share for each attribute in the fact table is $d_0p^{1/N}/d_0=p^{1/N}$,
where $N$ is the number of dimension tables and $p$ is the number of reducers. We assume, moreover, that dimension tables pairwise
do not share attributes. Thus, using the communication cost as computed
in \cite{AU10} and dividing it
by $f+Nd_0$ we get the replication rate:
$$r=\frac{f+Nd_0p^{\frac{N-1}{N}}}{f+Nd_0}$$
To compute  $p$ in terms of the average reducer size, we take the equation
$r(f+Nd_0)=pq$, which after replacing $r$ from above and dividing by $p$ becomes:
$f/p+Nd_0p^{\frac{-1}{N}}=q$.
In order to simplify the calculations, we assume  that $f/p =(1-e)q$ with $e$ being between 0 and 1 but not very small or very large. This is a reasonable assumption, since the size of the
fact table is much larger than the sizes of the dimension tables; it tells us that a good fraction
of the input to each reducer comes from the fact table.
Then we can solve the above to get:
$p=(Nd_0/eq)^N$,
and substituting it in the replication rate:
$$r=\frac{f+Nd_0(Nd_0/eq)^{N-1}}{f+Nd_0}$$
Substituting in the enumerator $f=pq(1-e)$ we get:
$$r=\frac{e(1-e)Nd_0(Nd_0/eq)^{N-1}}{f+Nd_0}$$

{\bf A Lower Bound for Star Join}.
Since the practical applications of star join assume that the
fact table is order of magnitude larger than the dimension tables,
we make a similar assumption here. Let $N$ be the number of dimension tables.
Suppose we have in our database $n_1$ constants (values) that
are values to the attributes outside the fact table, and the arity of each
dimension table is $m=m_1+m_2$ where $m_2$ is the number of attributes that
is shared with the fact table. Then we have at most $n_1^{Nm_1}$
tuples in the output of the join. Notice that this number is in general
much less than the size $f$ of the fact table. We apply our technique for finding lower bounds as follows:

\begin{enumerate}
 
\item {\bf Deriving $g(q)$:}
The parameter $\rho$ is equal to $N$, and thus $g(q)=q^N$.

\item {\bf Number of Inputs and Outputs:}
$|I|$ is $f+ n_1^{m_1}$.  $|O|$ is
$n_1^{Nm_1}$. Notice that the number of outputs only depends on
the dimension tables' parameters.

\item  {\bf $\sum_{i=1}^pg(q_i) \geq |O|$ Inequality:}  Substituting for $g(q_i)$ and $|O|$:
\begin{equation}
\label{tri1-eq-mj-star}
\Sigma_{i=1}^p q_i^{N} \geq n_1^{Nm_1}
\end{equation}
We replace a factor of $q_i^{N-1}$ on the left of Equation~\ref{tri1-eq-mj-star} by $q^{N -1}$, and then move that factor to the denominator of the right side:
\begin{equation}
\label{tri2-eq-mj-star}
\Sigma_{i=1}^p q_i \geq n_1^{Nm_1}/q^{N -1}
\end{equation}

\item {\bf Replication Rate:}
We get from Equation~\ref{tri2-eq-mj-star}
$$r = \frac{\sum_{i=1}^p q_i}{f+ n_1^{m_1}} \ge \frac{n_1^{Nm_1}/q^{N -1}}{f+ n_1^{m_1}}$$
We can write the above inequality as:
$$r=\frac{Nd_0(Nd_0/q)^{N-1}}{f+Nd_0}$$
This differs from the lower bound we computed by a factor of $e(1-e)/e^N$, which
under the assumptions of the star join can be thought of a constant.
\end{enumerate}

{\bf Size of output of multiway join in the general case}.
For the general case, we can apply the same technique to get lower bounds on
the replication rate,
only we need to know how to compute a bound on  the size of the output of
any multiway join. We explain here how to compute a tight bound as offered
in \cite{Atserias,GroheM06}.

Let $q$ be a multiway join and let $G(q)$ be the corresponding hypergraph.
Thus the nodes of the hypergraph are the attributes in the query and the edges of the hypergraph correspond to the relational atoms in the query. For each edge $e$ of $G(q)$
we have a variable $x_e$. Let $S$ be the number of subgoals in the query and $a_e$ be the number of attributes for the relational atom corresponding to the edge $e$.  We form the following linear program:
$$\Sigma_{e\in G(q)} a_e x_e \geq S$$
$$minimize \Sigma_{e\in G(q)} x_e$$
The solution to this program is called an optimal {\em fractional edge cover}
of the query hypergraph. It can be shown \cite{GroheM06} that there is always a solution
whose values are rational and of bit-length polynomial in the size of the query.
Fractional edge covers can be used to give an upper bound on the size $|O|$ of the
output of the query. Let $|R_e|$ be the size of the relation that corresponds
to the edge $e$ of the hypergraph $G(q)$.
$$|O| \leq \Pi_{e\in G(q)} |R_e|^{x_e}$$

\section{Matrix Multiplication}
\label{mm-sect}

We shall now take up the common application of matrix multiplication.
That is, we suppose we have $n\times n$ matrices $R=[r_{ij}]$ and $S=[s_{jk}]$ and we wish to form their product $T=[t_{ik}]$, where $t_{ik} = \sum_{j=1}^n r_{ij}s_{jk}$.
This problem introduces a number of ideas not present in the previous examples.
First, each output depends on many inputs, rather than just two or three.  In particular, the output $t_{ik}$ depends on an entire row of $R$ and and entire column of $S$, that is, $2n$ inputs, as suggested by Fig.~\ref{io-matrix-fig}.

\begin{figure}[htfb]
\centerline{\includegraphics[width=0.4\textwidth]{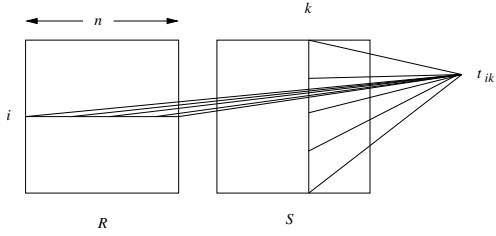}}
\caption{Input/output relationship for the matrix-multiplication problem}
\label{io-matrix-fig}
\end{figure}

There is also an interesting structure to the way outputs are related to inputs, and we can exploit that structure.  Finally, the fact that sum is associative and commutative lets us explore methods that use two interrelated rounds of map-reduce.  Surprisingly, we discover that two-round methods are never worse than one-round methods, and can be considerably better.

\subsection{The Lower Bound on Replication Rate}
\label{mm-rr-subsect}

\begin{enumerate}
\item {\bf Deriving $g(q)$:}
Suppose a reducer covers the outputs $t_{14}$ and $t_{23}$.  Then all of rows $1$ and $2$ of $R$ are input to that reducer, and all of columns $4$ and $3$ of $S$ are also inputs.  Thus, this reducer also covers outputs $t_{13}$ and $t_{24}$.  As a result, the set of outputs covered by any reducer form a ``rectangle,'' in the sense that there is some set of rows $i_1,i_2,\ldots i_w$ of $R$ and some set of columns $k_1,k_2,\ldots,k_h$ of $S$ that are input to the reducer, and the reducer covers all outputs $t_{i_uk_v}$, where $1\le u\le w$ and $1\le v\le h$.

We can assume this reducer has no other inputs, since if an input to a reducer is not part of a whole row of $R$ or column of $S$, it cannot be used in any output made by the reducer.  Thus, the number of inputs to this reducer is \hbox{$n(w+h)$}, which must be less than or equal to $q$, the upper bound on the number of inputs to a reducer.  As the total number of outputs covered is $gh$, it is easy to show that for a given $q$, the number of outputs is maximized when the rectangle is a square; that is, $w=h=q/(2n)$.  In this case, the number of outputs covered by the reducer is $g(q) = q^2/(4n^2)$.

\item {\bf Number of Inputs and Outputs:} There are two matrices each of size $n^2$. Therefore $|I|=2n^2$ and $|O|=n^2$.

\item  {\bf $\sum_{i=1}^pg(q_i) \geq |O|$ Inequality:}  Substituting for $g(q_i)$ and $|O|$:
$$\sum_{i=1}^p \frac{q_i^2}{4n^2} \ge n^2$$
\item {\bf Replication Rate:}
We first leave one factor of $q_i$ on the left as is, and replace the other factor $q_i$ by $q$.  Then, we manipulate the inequality so the expression on the left is the replication rate and obtain:
$$r = \sum_{i=1}^p \frac{q_i}{2n^2} \ge \frac{2n^2}{q}$$
\end{enumerate}

\subsection{Matching Upper Bound on Replication Rate}
\label{mm-upper-subsect}

The lower bound $r\ge 2n^2/q$ can be matched by an upper bound for a wide range of $q$'s.
If $q\ge 2n^2$, then the entire job can be done by one reducer, and if $q<2n$, then no reducer can get enough input to compute even one output.  Between these ranges, we can match the lower bound by giving each reducer a set of rows of $R$ and an equal number of columns of $S$.

The technique of computing the result of a matrix multiplication by tiling the output by squares is very old indeed \cite{MC69}.  In the map-reduce model, that is correct if a single round of map-reduce is used, but, as we shall see in Section~\ref{mm-2phase-subsect}, not quite correct for two-phase matrix multiplication, where the minimum cost occurs when the matrices are tiled with rectangles of aspect ratio 2:1.

Let $s$ be an integer that divides $n$, and let $q=2sn$.  Partition the rows of $R$ into $n/s$ groups of $s$ rows, and do the same for the columns of $S$.  There is one reducer for each pair $(G,H)$ consisting of a group $G$ of $R$'s rows and a group $H$ of $S$'s columns.  This reducer has $q=2sn$ inputs, and can produce all the outputs $t_{ik}$ such that $i$ is one of the rows in group $G$ and $k$ is one of the columns in the group $H$.
Since every pair $(i,k)$ has $i$ in some group for $R$ and has $k$ in some group for $S$, every element of the product matrix $T$ will be produced by exactly one reducer.

The replication rate for each input element is the number of groups with which its group is paired.  That number is $r = n/s$, since both $R$ and $S$ are partitioned into this number of groups.  Since $q = 2sn$, and thus $s=q/(2n)$, we have that $r = 2n^2/q$, exactly matching the lower bound on $r$.

\subsection{Matrix Multiplication Using Two Phases}
\label{mm-2phase-subsect}

There is another strategy for perfoming matrix multiplication using two map-reduce jobs.
As we shall see, this method always beats the one-phase method.  An interesting aspect of our analysis is that, while tiling by squares works best for the one-phase algorithm,

\begin{itemize}

\item
For the two-phase algorithm, the least cost occurs when the matrices are tiled with rectangles that have aspect ratio 2:1.

\end{itemize}

We assume that we are multiplying the same $n\times n$ matrices $R$ and $S$ as previously in this section.

\begin{enumerate}

\item
In the first phase, we compute $x_{ijk} = r_{ij}s_{jk}$ for each $i$, $j$, and $k$ between 1 and $n$.  We sum the $x_{ijk}$'s at a given reducer if they share common values of $i$ and $k$, thus producing a {\em partial sum} for the pair $(i,k)$.

\item
In the second phase, the partial sum for each pair $(i,k)$ is sent from each reducer that has computed at least one $x_{ijk}$ for some $j$ to a reducer of the second phase whose responsibility to to sum all these partial sums and thus compute $t_{ik}$.

\end{enumerate}
Figure~\ref{2phasemr-fig} suggests what the mappers and reducers of the two phases do.

\begin{figure}[htfb]
\centerline{\includegraphics[width=0.5\textwidth]{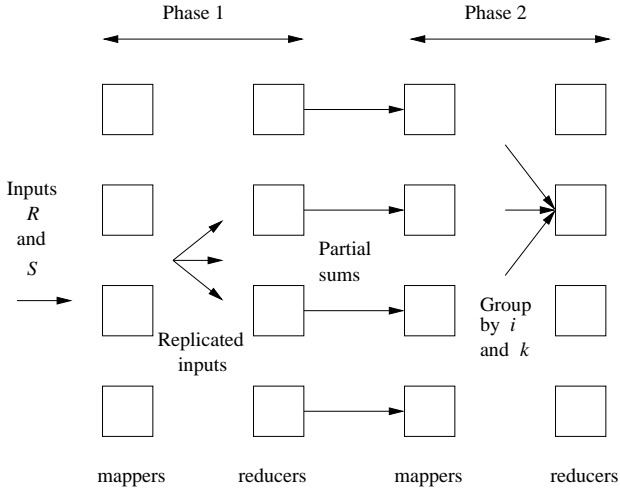}}
\caption{The two-phase method of matrix multiplication}
\label{2phasemr-fig}
\end{figure}

The second phase is embarrassingly parallel, since each partial sum contributes to only one output.  However, the first phase requires careful organization.  To begin, it is not sufficient to compute only the replication rate of the first phase, since there is significant communication in the second phase.  The number of partial sums could be as large as $n^3$ and thus dominate the communication cost.  We shall thus calculate the total communication involved in both phases.

To begin this calculation, note that the mappers of the second phase can reside at the same compute node as the $x_{ijk}$'s to which they apply.  Thus, no communication is needed between the first-phase reducers and the second-phase mappers.  The communication between the second-phase mappers and reducers is equal to the sum over all first-phase reducers of the number of different $(i,k)$ pairs for which they compute at least one $x_{ijk}$.

The communication between first-phase mappers and reducers depends on the limit $q$ we choose for the number of inputs to a reducer in the first phase\footnote{The reducers in the second phase require only $n$ inputs in the worst case, so we can ignore the input size for the second-phase reducers.}
and on the strategy we use for assigning inputs to these reducers.  As for the one-phase algorithm, we can simplify the options regarding what inputs go to what reducers by observing that the set of outputs covered by a reducer again forms a ``rectangle.'' That is, if a reducer covers both $x_{ijk}$ and $x_{yjz}$, then it also covers $x_{ijz}$ and $x_{yjk}$.  The proof is that to cover $x_{ijk}$ the reducer must have inputs $r_{ij}$ and $s_{jk}$, while to cover $x_{yjz}$ the same reducer gets inputs $r_{yj}$ and $s_{jz}$.  From these four inputs, the reducer can also cover $x_{ijz}$ and $x_{yjk}$.

We now know that the set of outputs covered by a reducer can be described for each $j$ by a set $G_j$ of row numbers of $R$ and a set of column numbers $H_j$ of $S$, such that the outputs covered are all $x_{ijk}$ for which $i$ is in $G_j$ and $k$ is in $H_j$.  As before, the greatest number of covered outputs occurs when the rectangle is a square.  That is, each reducer has an equal number of rows and columns for each $j$.  We do not know that these rows and columns must be the same for each $j$, but it is easy to argue that if not, we could reduce the communication in the first and second phases, or both, by using the same sets of rows and columns for each $j$.

Thus, we shall assume that each reducer in the first phase is given a set of $s$ rows of $R$, $s$ columns of $S$, and $t$ values of $j$ for some $s$ and $t$.  Figure~\ref{st-cube-fig} suggests how one reducer covers a cube in the three-dimensional space defined by the indexes $i$, $j$, and $k$.  There is a reducer covering each $x_{ijk}$, which means that the number of reducers is $(n/s)^2(n/t)$.   Then each element of matrices $R$ and $S$ must be sent to $n/s$ reducers, so the total communication in the first phase is $2n^3/s$.  To see why, consider an element $r_{ij}$ of matrix $R$.  We know $i$ and $j$, so only $k$ is unknown.  The number of reducers that need inputs with the particular $i$ and $j$ and some $k$ is $n/s$.  The analogous argument applies to elements of matrix $S$.

\begin{figure}[htfb]
\centerline{\includegraphics[width=0.25\textwidth]{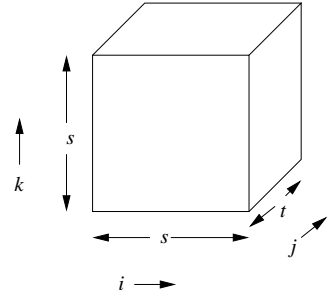}}
\caption{The responsibility of one reducer in the first phase}
\label{st-cube-fig}
\end{figure}

Each reducer produces a partial sum for $s^2$ pairs $(i,k)$.  Thus, the communication in the second phase is $s^2$ times the number of reducers, or $s^2(n/s)^2(n/t) = n^3/t$.  The sum of the communication in the first and second phases is
$$\frac{2n^3}{s} + \frac{n^3}{t}$$
We must minimize this function subject to the constraint that $2st=q$, where $q$ is the maximum number of inputs a reducer in the first phase can receive.  The reason for this constraint is that such a reducer receives $r_{ij}$ for $s$ different values of $i$ and $t$ different values of $j$, and it receives $s_{jk}$ for $s$ different values of $k$ and $t$ different values of $j$.  The method of Lagrangean multipliers lets us show that the minimum is obtained when $s=2t$.  That is, $t=\sqrt{q}/2$ and $s=\sqrt{q}$.

With these values of $s$ and $t$, the total communication is
$$\frac{2n^3}{\sqrt{q}} + \frac{n^3}{\sqrt{q}/2} = \frac{4n^3}{\sqrt{q}}$$
On the other hand, the total communication for the optimum one-phase method described in Section~\ref{mm-upper-subsect} is the replication rate times the number of inputs, or $$(2n^2/q)\times 2n^2 = 4n^4/q$$

For what values of $q$ does the one-phase method use less communication than the two-phase method?  Whenever
$$\frac{4n^4}{q} < \frac{4n^3}{\sqrt{q}}$$
or $q>n^2$.  That is, for any number of reducers except 1, the two-phase method uses less communication than the one-phase method, and for small $q$ the two-phase approach uses a lot less communication. There are other costs besides communication, of course, but since both methods perform the same arithmetic operations the same number of times, we expect that in most situations, the communication difference is decisive.

\section{Summary}
\label{summary-sect}

This paper has attempted to set a new direction for the study of optimal map-reduce algorithms.
We introduced a simple model for map-reduce algorithms, enabling us to study their performance across a spectrum of possible computing clusters and computing-cluster properties such as communication speed and main-memory size. We identified {\em replication rate} and {\em reducer input size} as two parameters representing the communication cost and compute-node capabilities, respectively, and we demonstrated that for a wide variety of problems these two parameters are related by a  precise tradeoff formula. These problems include finding bit strings at a fixed Hamming distance, finding triangles and other fixed sample graphs in a larger data graph, computing multway joins, and matrix multiplication.

\subsection{Open Problems}
\label{open-prob-subsect}

The analyses done in this paper for several problems of interest should be carried out for many other problems.  Discovering the tradeoff for Hamming distances greater than 1 seems hard.  Analogous investigations are warranted for other kinds of similarity joins besides those based on Hamming distance.   One question that arises naturally is how closely the general lower bound on multiway joins derived in this paper matches the general upper bounds in \cite{AU10}? Since there is no closed formula for either upper or lower
bounds in the general case, this question seems to need nontrivial arguments in order
to be answered.

Another interesting direction is to explore whether it is possible to analyze algorithms taking two or more rounds of map-reduce along the lines of Section~\ref{mm-2phase-subsect}.  A possible first place to look is at SQL statements that require two phases of map-reduce, e.g., joins followed by aggregations.

\bibliographystyle{abbrv}
\bibliography{map-reduce}


\clearpage

\end{document}